\documentclass{scrartcl}

\PassOptionsToPackage{nameinlink}{cleveref}

\usepackage[todo,firstnames]{group}

\usepackage[margin=2.5cm]{geometry}

\usepackage{mathtools}
\usepackage{theoremref}
\usepackage{pgfplots}
\usepackage{cancel}
\usepackage{xcolor}
\usepackage{graphicx} 

\hypersetup{
	pdfencoding=auto, 
	psdextra,
	colorlinks=true,
	citecolor=green!50!black,
	linkcolor=red!60!black,
	urlcolor=blue!90!black
}

\usepackage{subcaption}
\usepackage{makecell} 

\title{Jointly Satisfying Pareto Optimality and Justified Representation is NP-Hard in Approval-Based Multiwinner Voting}

\usepackage{authblk}

\author[]{Chris Dong}
\affil[]{Hasso Plattner Institute, University of Potsdam}
\date{\vspace{-1cm}}

\begin{document}

\maketitle

\begin{abstract}	
An open problem in approval-based multiwinner voting concerns whether we can efficiently compute committees that satisfy both justified representation and Pareto optimality. We answer this question negatively by proving that, on the domain of all profiles, outputting a committee satisfying both axioms is NP-hard.
An initial proof was found by ChatGPT Astra. This was then verified and rewritten by the author.
\end{abstract}


\section{Introduction}
In approval-based multiwinner voting, the goal is to select a committee, i.e., a fixed-size subset of the candidate set, given voters' approvals over the candidates. Here, intuitively, an approval indicates that the voter likes the candidate. Formally, this is modelled via voting rules which take as input the voters' approvals and output a committee.
Two canonical goals in this setting regard fairness and efficiency of the selected committee: justified representation (JR) requires that each large enough group of voters agreeing on at least one candidate is represented by at least one committee member. On the other hand, Pareto optimality (PO) requires that the committee leaves no room for clear improvements: there should not exist another committee that grants some voters strictly more approved candidates without decreasing the approval count of other voters.
Obtaining a PO committee is possible in polynomial time \citep{caragiannis2010approximation}, and obtaining a JR committee can be done in polynomial time, too \citep{ABC+16a}. Further, PO and JR are compatible: voting rules such as Proportional Approval Voting (PAV) are known to always return committees satisfying both axioms \citep{ABC+16a}. However, no polynomial-time computable rules are known at the time of this work that achieve both PO and JR. Therefore,
a natural open question asks for the existence of polynomial-time computable voting rules that always return committees satisfying PO and JR \citep{peters2025facets,LaSk23a,schunke2026pareto}.

\paragraph{Contribution.} We show that it is NP-hard to compute any committee satisfying JR and PO. More precisely, we prove that any voting rule that only outputs committees satisfying JR and PO is NP-hard to compute.

\paragraph{Related Work.} There are several strands of broadly related work. Many axioms modelling proportional representation have been proposed and thoroughly studied \citep[see, e.g., the work of][]{ABC+16a,SFF+17a,BrPe23a}. Consequently, many voting rules were proposed and analyzed regarding their computational properties and proportionality guarantees \citep[see, e.g., the work of][]{aziz2018complexity,PeSk20a,SFL16a,BFJL16a,BrPe23a}.
More closely related, among committees satisfying JR, \citet{EFI+24a} study the complexity of maximizing welfare (i.e., the sum of voters' approval counts) and provide hardness for approximation of this objective. Further, \citet{aziz2020computing} show that, given a committee and an instance, deciding whether the committee satisfies PO is coNP-complete.

On restricted domains, positive results have been achieved: \citet{schunke2026pareto} proves that a strengthening of JR called EJR+ can be satisfied with PO in polynomial time on the candidate interval domain and the voter interval domain. Similarly, \citet{MaSo26a} prove that so-called Thiele rules, including PAV, can be computed on the voter interval domain in polynomial time. PAV satisfies both EJR+ and PO. \citet{ALSV26a} provide polynomial-time computation of PAV on the more general domain of linearly consistent preferences.

Finally, \citet{BBFG26a} study a strengthening of PO, namely fractional Pareto optimality (fPO), under which a committee can be Pareto-dominated by a distribution over committees. They prove that no Thiele rule satisfies both JR and fPO, but leave open whether arbitrary rules can satisfy both axioms jointly. \citet{JLT26a} answer this question, proving that JR is incompatible even with a weaker fractional variant of fPO.

None of the above results directly imply an answer to the question of whether JR and PO are jointly satisfiable in polynomial time on the full domain allowing all approval profiles. On a technical level, the use of fill-up candidates in the proof of \Cref{theorem} is inspired by the impossibility result of \citet{JLT26a}.

\section{Preliminaries}
Let a set of $n$ \emph{voters} $V$ and a set of $m$ \textit{candidates} $C$ be given. Each $v\in V$ submits an \textit{approval set} $A_v \subseteq C$ indicating the candidates she likes, and the collection of all approval sets $A=(A_v)_{v\in V}$ is called an \textit{approval profile}. We write $[i]\coloneqq \{1,\dots, i\}$. An \textit{approval based committee (ABC) voting instance} $(A,k)$ is given via an approval profile $A$ and a \emph{target size} $k\in [m]$. 
A \emph{committee} $W\subseteq C$ is a collection of candidates of target size $k$. 
An \textit{ABC voting rule} $f$ takes as input instances $(A,k)$ and outputs a committee $f(A,k)$.

How can we determine whether the committee $W= f(A,k)$ is a ``good'' or a ``fair'' committee? One way to exclude intuitively weak committees is to use the following notion of optimality.
\begin{definition}
Given an instance $(A,k)$, a committee $W$ \emph{Pareto dominates} another committee $W'$, if $\lvert A_v\cap W\rvert \ge \lvert A_v\cap W'\rvert$ for all $v\in V$, and further $\lvert A_v\cap W\rvert > \lvert A_v\cap W'\rvert$ for some voter $v\in V$.
A committee $W$ is \emph{Pareto optimal}, if no committee $W'$ Pareto dominates it.
\end{definition}
However, Pareto optimality alone does not ensure that a committee is fair. We therefore consider a standard notion of proportional representation.
For this purpose, we say that a group of voters $S\subseteq V$ is \emph{$\ell$-cohesive} if $\lvert S \rvert \ge \ell \frac nk$ and $\lvert\bigcap_{v\in S}A_v \rvert \ge \ell$. 
Intuitively, the following notion requires that large enough groups of voters who agree on sufficiently many candidates should be represented by the committee accordingly.

\begin{definition}[\citealp{SFF+17a}]
Given an instance $(A,k)$, a committee $W$ satisfies \emph{proportional justified representation (PJR)}, if for all $\ell \in [k]$ and all $S\subseteq V$ that are $\ell$-cohesive, it holds that $\lvert \bigcup_{v\in S} A_v \cap W \rvert \ge \ell$.
\end{definition}

With efficiency and fairness being two natural desiderata, the following question has been posed:
\begin{quote}
   It is still unclear which combinations of axiomatic properties of ABC rules can be achieved in
polynomial time. [\dots] For example, [\dots] is there a polynomial-time computable ABC rule that is proportional (e.g., that satisfies PJR) and satisfies Pareto
optimality? \citep[][Q5]{LaSk23a}   
\end{quote}

We answer this question negatively.\footnote{The answer would be trivial if the axioms were not jointly satisfiable. Note, however, that PO and PJR are known to be compatible, i.e., there always exists a committee satisfying both notions. For instance, \citet{ABC+16a} have proven that a rule called Proportional Approval Voting (PAV) only outputs committees satisfying both axioms.} 
In fact, we prove a stronger impossibility result by considering a weaker form of proportionality. Intuitively, the following notion only demands that large enough and cohesive groups are represented by at least one candidate.

\begin{definition}[\citealp{ABC+16a}]
Given an instance $(A,k)$, a committee $W$ satisfies \emph{justified representation (JR)}, if for all $c\in C \setminus W$ and all $S\subseteq V$ with $c\in A_v$ for all $v\in S$ of size at least $\lvert S\rvert \ge \frac nk$, it holds that there exists $v\in S$ with $A_v\cap W \neq \emptyset$.
\end{definition}

Indeed, PJR can easily be seen to imply JR, and \citet{peters2025facets} observes that ``there are no known polynomial-time computable rules satisfying even JR and Pareto optimality together.''




\section{Main Result}
We are ready to state the main result.

\begin{theorem}\label{theorem}
    If $f$ is an ABC voting rule such that for all instances $(A,k)$, the committee $f(A,k)$ satisfies Pareto optimality and JR, then $f$ is NP-hard to compute. 
\end{theorem}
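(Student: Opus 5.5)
The plan is a polynomial-time reduction from an NP-complete problem, chosen so that from \emph{any} committee satisfying JR and Pareto optimality we can read off, in polynomial time, the answer to the original instance. The natural source is Exact Cover by 3-Sets (X3C): a universe $U$ with $\lvert U\rvert=3q$ and a family $\mathcal F$ of 3-element subsets, asking whether some $q$ members of $\mathcal F$ partition $U$. Given an X3C instance, I will build an ABC instance $(A,k)$ and prove the following key lemma.

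\begin{quote}
If an exact cover exists, then every committee $W$ satisfying JR and Pareto optimality contains (and decodably encodes) an exact cover.
\end{quote}

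With the lemma in hand, the theorem follows. An algorithm for $f$ gives an algorithm for X3C: compute $W=f(A,k)$, extract the set-candidates in $W$, and check in polynomial time whether they form an exact cover. If no exact cover exists, this check fails trivially. Hence computing $f$ is NP-hard.

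\textbf{Construction.} There is one candidate $c_F$ per set $F\in\mathcal F$. For each element $u\in U$ there is a group of element voters, all approving exactly the $c_F$ with $u\in F$. In addition, following the fill-up idea of \citet{JLT26a}, I add fill-up candidates together with padding voters approving them. These are used to tune $n$ and $k$ so that two things hold.

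\begin{itemize}
\item[(i)] The quota $n/k$ equals the size of a single element group. JR then forces every element group, which agrees on any $c_F$ containing its element, to have at least one approved member in $W$. So the set-candidates in $W$ must cover $U$.
\item[(ii)] The budget left for set-candidates is effectively $q$. The padding voters' fill-up candidates, each approved by a cohesive group of size $\ge n/k$, must by JR occupy all but about $q$ seats. The fill-up candidates should also be arranged so that any committee using more than $q$ set-seats fails JR for some padding group.
\end{itemize}

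Covering $3q$ elements with $q$ triples is exactly an exact cover, which would give the lemma even without PO.

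\textbf{The role of Pareto optimality.} The count in (ii) is where I expect the real difficulty. JR alone is satisfiable in polynomial time, so a pure JR-forcing argument cannot work. The JR constraints will only pin down the structure up to some slack, for example allowing a fill-up candidate to be dropped when one of its supporters is already represented elsewhere. Pareto optimality must eliminate that slack. The step I would try first is a local exchange argument. Suppose $W$ satisfies JR but uses its set-seats on a non-partition, or wastes seats. Then, assuming an exact cover $\mathcal F^*$ exists, I will construct the committee consisting of $\mathcal F^*$ plus the full fill-up block and show that it Pareto dominates $W$. Concretely, I want each element voter to have exactly one approved set-candidate under $\mathcal F^*$, and each padding voter to have its maximum count. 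Any deviation must then strictly hurt someone or be weakly worse for all.

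The main obstacle is designing the fill-up gadget so that this domination is genuinely \emph{weak for everyone}. A committee that double-covers some element gives those voters $2$ instead of $1$, so it is not dominated by $\mathcal F^*$ on those voters. I would handle this by making the fill-up candidates approved also by element voters in a way that charges every extra set-seat one lost fill-up approval for some voter. Overlap in $W$ then costs some voter a unit elsewhere that is not compensated, and the exact-cover committee dominates.

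I would then verify three things:
\begin{enumerate}
\item the construction is polynomial;
\item the exact-cover committee itself satisfies JR, which ensures the lemma's hypothesis is consistent, as PAV guarantees existence anyway;
\item every JR committee that is not of exact-cover form is dominated.
\end{enumerate}
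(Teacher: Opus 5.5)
Your high-level strategy (reduce from X3C, pad with fill-up candidates, read the answer off an arbitrary JR+PO committee) is the right one. However, there is a genuine gap: the construction in (i)--(ii) cannot exist, and the Pareto step, which has to carry the whole proof, has no mechanism.

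Your remark that (i)--(ii) would give the lemma ``even without PO'' signals that they cannot both hold, not merely that they hold up to slack. The rule that repeatedly adds a candidate approved by at least $n/k$ currently unrepresented voters satisfies JR in polynomial time, and it would then solve X3C. Concretely, padding voters approve only fill-up candidates, so that greedy procedure meets all their JR constraints with at most $Pk/n$ fill-up seats, where $P$ is the number of padding voters. The $3q$ element groups of size $n/k$ use $3q\,n/k$ voters, so $P\le (k-3q)\,n/k$. Hence JR never forces more than $k-3q$ fill-up seats, and committees with up to $3q$ set-candidates (overlaps included) remain JR. The budget of $q$ set-seats is not enforced, so there is no concrete instance on which to run a PO argument.

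For the PO step, a single fixed committee ($\mathcal F^*$ plus the fill-up block) cannot dominate every bad JR committee. Domination is pointwise, and a JR committee can distribute its fill-up seats in many ways, giving different voters extra approvals. Moreover, if every candidate has the same number of supporters, which is what balancing tends to produce, all committees have the same total approval count. Then no committee dominates another, so PO has no bite.

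The paper supplies the missing ingredients. Every candidate is approved by exactly three of the $3k-1$ non-answer voters, with $n=3k$ and four fill-up candidates for \emph{every} triple of non-answer voters. JR then leaves at most two of them unrepresented, so their total excess $\sum_v\max(\lvert A_v\cap W^*\rvert-1,0)$ is at most $3$. An extra answer voter approves all set-candidates, which breaks the symmetry, and the X3C structure is replicated six times. The dominating committee is built \emph{from} the output $W^*$ itself:
\begin{itemize}
\item take the six copies of the exact cover;
\item drop the at most two candidates hitting unrepresented voters;
\item use the decomposition lemma (demands $\delta_v\le b$ with $\sum_v\delta_v=3b$ split into $b$ triples) to add fill-up candidates that restore every non-answer voter's count exactly.
\end{itemize}
PO of $W^*$ then forces the answer voter to have at least $6q-2$ approvals. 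Conversely, from such a committee, deleting at most three overlapping candidates and applying pigeonhole over the six copies recovers an exact cover.

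The replication is needed because the constant slack is real. In a single-copy version of this gadget, take $q-1$ cover sets plus suitably chosen fill-up candidates that cover the third element of the last set and the padding voters, leaving the other two elements unrepresented. This committee is both JR and PO. So a single-copy lemma of the form you state is false there, and your proposal has no mechanism to absorb such slack.
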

In fact, \Cref{theorem} still holds if we substitute JR  with any strengthening of it that is compatible with Pareto optimality. Therefore, \Cref{theorem} entails that there exists no tractable rule satisfying PJR and PO unless P=NP.

The remainder of this section is dedicated to the proof of the result. First, we require the following combinatorial lemma:
\begin{lemma}\label{lem:3satisfaction}
    Let $B$ be a set of elements, $b\in \mathbb N$, such that each $v\in B$ has a natural number $\delta_v \in \{0,\dots, b\}$ with $\sum_{v\in B} \delta_v = 3b$. Then, there exists $B^1, \dots, B^b\subseteq B$, each of size $3$, such that\footnote{possibly the same set multiple times} for each $v\in B$, $v\in B^j$ holds precisely for $\delta_v$ many indices $j\le b$.
\end{lemma}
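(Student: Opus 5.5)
The plan is to give a direct construction by filling a $b\times 3$ array column by column, i.e.\ the standard ``wrap-around'' argument used for scheduling with bounded multiplicities. Note first that the statement implicitly requires each $B^j$ to consist of $3$ \emph{distinct} elements of $B$, since it must be a set of size $3$. So the only thing to guarantee is that no element lands twice in the same triple. If $b=0$, all $\delta_v$ are $0$ and there is nothing to prove, so assume $b\ge 1$.

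Fix an arbitrary ordering $v_1,\dots,v_{|B|}$ of $B$. Form the sequence $s_0,s_1,\dots,s_{3b-1}$ obtained by listing $v_1$ exactly $\delta_{v_1}$ times, then $v_2$ exactly $\delta_{v_2}$ times, and so on. This sequence has length exactly $\sum_{v}\delta_v=3b$, and each element $v$ occupies a contiguous block of $\delta_v$ consecutive positions. For $j\in[b]$ define
\[
B^j \coloneqq \{\, s_{p} : p\in\{0,\dots,3b-1\},\ p\equiv j-1 \pmod b \,\},
\]
i.e.\ $B^j=\{s_{j-1},s_{j-1+b},s_{j-1+2b}\}$. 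This corresponds to writing the sequence into a $b\times 3$ array column by column and reading off the rows.

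The key step is to show that the three entries of each row are pairwise distinct, and that each $v$ lies in exactly $\delta_v$ rows. Both follow from the bound $\delta_v\le b$. The positions occupied by $v$ form an interval of at most $b$ consecutive integers, so their residues modulo $b$ are pairwise distinct. Hence $v$ appears in at most one slot of any given row, which means every $B^j$ has exactly three distinct elements and thus $|B^j|=3$. Moreover, the $\delta_v$ occurrences of $v$ fall into $\delta_v$ \emph{different} rows, so $v\in B^j$ holds for precisely $\delta_v$ indices $j$.

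There is no real obstacle here. The only point requiring care is this distinctness argument, which is exactly where the hypothesis $\delta_v\le b$ is used: without it, some $v$ would wrap around and appear twice in one row. Repeated sets $B^j=B^{j'}$ are allowed by the footnote, so nothing further needs to be checked.
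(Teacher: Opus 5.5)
Your argument is correct, but it takes a different route from the paper. The paper inducts on $b$ with a largest-first greedy step. At least three elements have $\delta_v>0$, since otherwise the sum would be at most $2b$. It takes $B^1$ to be three elements with the largest $\delta$-values, decrements them, and checks that the residual vector is a valid instance for $b-1$. The crux there is that at most three elements can have $\delta_v=b$, because four would already contribute $4b>3b$; so all of them get decremented and the cap $b-1$ holds again.

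Your wrap-around construction, where position $p$ of the concatenated sequence goes to row $p \bmod b$, replaces this invariant maintenance with one explicit, non-recursive assignment. The hypothesis $\delta_v\le b$ is then used exactly once and transparently: a contiguous block of length at most $b$ cannot contain two positions with the same residue. Both arguments are equally elementary, and both extend verbatim to blocks of any fixed size $s$ with $\sum_v\delta_v=sb$.

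For the main proof either version suffices. There only existence is needed, and the later bound $r(D)\le 4$ holds for any valid family, since a triple occurring $r$ times forces $\delta_v\ge r$ for each of its members.

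One cosmetic point: ordering all of $B$ as $v_1,\dots,v_{|B|}$ presumes $B$ is finite. That holds in the application, but in general it is enough to order the finite support of $\delta$.
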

\begin{proof}
    For $b=0$, the statement is trivially true.
    Let $b>0$. There exist at least $3$ elements $v\in B$ with $\delta_v>0$, as otherwise by $\delta_{v'}\le b$ we would have that $\sum_{v'\in B} \delta_{v'}\le 2b$. Therefore, choose three elements $v,w,x\in B$ with largest $\delta$-values. Set $B^1=\{v,w,x\}$.
    Then, we set $\delta'_{v'}= \delta_{v'}$ for all $v'\neq v,w,x$, and $\delta'_{y}= \delta_{y}-1$ for all $y\in B^1$. We verify that we can apply induction: clearly, $\sum_{v'\in B} \delta'_{v'} = 3b -3 = 3(b-1)$. Further, since $\sum_{v\in B} \delta_v = 3b$, there can be at most $3$ elements with $\delta_{v'}= b$. Since we subtract $1$ from the three largest demand values, we have $\delta'_{v'}\le b-1$ for all $v'\in B$. Induction now gives us a list of sets $B^2, \dots, B^{b}$ meeting the demands of the vector $\delta'$. This means that $B^1, B^2, \dots, B^{b}$ constitutes a solution of the original problem.
\end{proof}

We are now ready to present the proof of \Cref{theorem}.
\begin{proof}
    \textbf{The reduction problem.} We consider the problem \textsc{Exact Cover by 3-Sets (X3C)}, which is known to be NP-complete \citep{garey1990guide}. Formally, in this problem we are given some $t\in \mathbb N$, a universe $U$ of $3t$ elements, and a family $\mathcal U \subseteq \binom{U}{3}$ of size-three-subsets of $U$. Given such a X3C-instance, the question is whether there exists a selection $\mathcal T \subseteq \mathcal U$ of size $t$ partitioning $U$.

    \textbf{Building an ABC voting instance.} Given an \textsc{X3C} instance, we build the following ABC voting instance $(A,k)$, where we rebuild the X3C problem six times and add an additional voter from which we read off the answer. For each element $u\in U$, we instantiate \emph{$U$-voters} $v^u_{x}$, $x\in [6]$. We further add eleven \emph{padding voters} $v^p_1,\dots,v^p_{11}$, and one \emph{answer voter} $v^*$. Call $B\coloneqq V\setminus \{v^*\}$ the set of \emph{X3C voters}, which consists of $18t + 11$ voters (whereas $n= 18t +12$).
    For each admissible three element set $T\in \mathcal U$, we instantiate \emph{$\mathcal U$-candidates} $c^T_{x}$, $x\in [6]$. For each $x\le 6$, $c^T_{x}$ is approved by the three $U$-voters $v^u_{x}$ with $u\in T$, and by $v^*$. Therefore, for each $x$, each exact set cover $\mathcal T\subseteq \mathcal U$ corresponds to the candidate set $C^{\mathcal T}_x= \{c^{T}_{x}\in C \mid T\in \mathcal T\}$ such that each $v^u_{x}$ approves precisely one candidate from it.
    Next, for each $D\subseteq B$ with $\lvert D\rvert = 3$, create \emph{fill-up candidates} $d^D_{y}$, $y\in [4]$ approved by $D$, i.e., for all $v\in V$, $d^D_{y}\in A_v$ iff $v\in D$. As target size, we set $k= 6t+4$. Therefore, $\frac nk = 3$.\footnote{Note that the created ABC voting instance is indeed polynomial in the input size, as it contains $6 \lvert \mathcal U \rvert + 4 \binom{18t+11}{3}$ many candidates.}

    \textbf{The equivalence.} We claim the following: let $W^*= f(A,k)$ be any committee satisfying JR and PO. Then, the original X3C instance was a YES instance if and only if $\lvert W^*\cap A_{v^*}\rvert \ge 6t-2$. I.e., by querying the number of approvals the answer voter obtains from $W^*$, we can decide the X3C problem.

    To prove this claim, fix any committee $W^*$ satisfying PO and JR.

    ``$\Longrightarrow$'' First, for the more involved implication, let the X3C instance be a YES instance. There exists an exact 3-cover $\mathcal T\subseteq \mathcal U$.
    Our goal is to construct a committee $W$ such that $\lvert A_v\cap W^*\rvert = \lvert A_v\cap W\rvert$ for all $v\in B$, and $\lvert A_{v^*}\cap W\rvert \ge 6t-2$. Then, by Pareto-optimality of $W^*$, it holds that $\lvert A_{v^*}\cap W^*\rvert \ge 6t-2$. To construct $W$, we will use the sets $C^{\mathcal T}_x$ for $x\in [6]$. Informally, the idea is to select all candidates corresponding to $\mathcal T$, then discard the candidates that are approved by voters $v\in B$ with $A_v\cap W^* = \emptyset$, and then to fill the remaining seats to restore the X3C voters' approval counts. 
    
    For this purpose, first consider $B_0=\{v\in B\mid A_v \cap W^* = \emptyset\}$. We derive two useful bounds. Since $W^*$ satisfies JR and $\frac nk = 3$, we have 
    \begin{equation}\label{eq:1}
        \lvert B_0 \rvert \le 2\text{.}    
    \end{equation}
    Further, note that each candidate of either type,  $c^T_{x}$ and $d^D_{x}$, is approved by three X3C voters. Therefore, we have that $\sum_{v\in B} \lvert A_v\cap W^*\rvert = 3k$.
    By choice of our parameters, $n= 3k$ and there are precisely $3k-1$ X3C voters. Of these, $\lvert B_0 \rvert$ approve of zero candidates in $W^*$, therefore $3k - 1 - \lvert B_0 \rvert$ X3C voters approve of at least $1$ candidate in $W^*$. We now count how many approvals X3C voters may obtain beyond their first approval, via $\sum_{v\in B} \max(\lvert A_v\cap W^*\rvert -1, 0) = 3k - (3k-1-\lvert B_0 \rvert) = 1+ \lvert B_0 \rvert \le 3$.
    In particular, 
    \begin{equation}\label{eq:2}
        \lvert A_v\cap W^* \rvert \le 4 \quad \forall v\in B\text{.}
    \end{equation}

    We are ready to make our construction of $W$ rigorous: first, set $P_0 = \{c^T_{x}\mid T\in \mathcal T, x\in [6]\}$. We reduce the candidate set $P_1= P_0 \setminus \bigcup_{v\in B_0} A_v$. We now aim to apply \Cref{lem:3satisfaction} to fill up $P_1$ as required. For this purpose, we set $\delta_v \coloneqq \lvert A_v \cap W^*\rvert - \lvert A_v \cap P_1\rvert $ for all $v\in B$. We claim that $\delta_v \ge 0$ for all $v\in B$: each padding voter only approves of fill-up candidates, whereas $P_1\subseteq P_0$ consists only of $\mathcal U$-candidates. Further, each $v^u_x$ approves of precisely one candidate from $P_0$, namely $c^T_{x}$ for the unique $T\in\mathcal T$ containing $u$. Therefore, all $U$-voters $v= v^u_x\notin B_0$ have $\lvert A_v \cap W^*\rvert \ge 1 = \lvert A_v \cap P_0\rvert \ge \lvert A_v \cap P_1\rvert$, whereas trivially for all $v\in B_0$ we have $\lvert A_v \cap P_1\rvert = 0 \le \lvert A_v \cap W^*\rvert$. 
    Further $\delta_v \le 4$ by Equation \ref{eq:2}. To fill up $P_1$, we set $b\coloneqq k-\lvert P_1 \rvert$. And since $\lvert P_1 \rvert \leq \lvert P_0\rvert = 6t$ and $k= 6t+4$, this implies $b \ge 4\ge \delta_v \ge 0$. Recall that each $c\in C$ is approved by three X3C-voters. Therefore, $\sum_{v\in B} \delta_v = 3k - 3\lvert P_1 \rvert = 3 (k-\lvert P_1 \rvert ) = 3b$.  \Cref{lem:3satisfaction} yields a sequence of triples $B^1,\dots, B^b\subseteq B$ (each possibly occurring multiple times) such that $\lvert \{i\le b\mid v\in B^i\}\rvert = \delta_v$ for each $v\in B$. Start with $P_2= \emptyset$. For each set $D\in \{B^1,\dots, B^b \}$, we set $r(D) = \lvert \{i\le b \mid D= B^i\}\rvert \ge 1$. We add $d^D_{1},\dots, d^D_{r(D)}$ to $P_2$. Since there exist only four fill-up candidates for any given triplet $D\subseteq B$, we show that $r(D)\le 4$: if a set appears $r(D)$ times, then for all $v$ in this set we have $\delta_{v}\ge r(D)$. However, we know that $\delta_{v}\le 4$, proving the claim.
    This yields the desired set $P_2$ of size $b$, such that $\lvert A_v\cap P_2 \rvert = \lvert \{i\le b \mid v \in  B^i\} \rvert = \delta_v$. 
    Finally, we can set $W= P_1 \cup P_2$, and by definition obtain precisely that $\lvert A_v\cap W \rvert = \lvert A_v\cap P_1 \rvert + \lvert A_v\cap P_2 \rvert = \lvert A_v\cap W^* \rvert -\delta_v + \delta_v = \lvert A_v\cap W^* \rvert$ for all $v\in B$. Further, we claim that we remove at most $2$ candidates from $P_0$ to obtain $P_1$. For this, recall that each $v\in B_0$ that is a padding voter does not approve of any $\mathcal U$-candidates and therefore does not affect $P_0$; and that each $v\in B_0$ that is a $U$-voter approves of precisely $1$ candidate from $P_0$. Together with $\lvert B_0 \rvert\le 2$ from Equation \ref{eq:1}, this yields the claim.
    Applying this and the fact that $v^*$ only approves of $\mathcal U$-candidates, we obtain $\lvert A_{v^*} \cap W \rvert = \lvert A_{v^*} \cap P_1 \rvert  = \lvert P_1 \rvert  \ge \lvert P_0 \rvert -2 \ge 6t-2$.

    ``$\Longleftarrow$'' For the remaining implication, let the answer voter satisfy $\lvert A_{v^*} \cap W^* \rvert \ge 6t-2$. Our goal is to construct from $W^*$ a set $Q$ of $\mathcal U$-candidates such that $\lvert Q \rvert >6(t-1)$ and for all $x\in [6]$ and all distinct $c^T_{x}, c^{T'}_{x}\in Q$ we have $T\cap T' = \emptyset$. If we find such a set $Q$, then by the pigeonhole principle, there exists $x\le 6$ such that $\lvert \{c^T_{x}\in Q\mid T\in \mathcal U\} \rvert \ge t$, and we can take the corresponding $t$ sets $T$ which form a solution to \textsc{X3C}. 

    Let $Q_0\subseteq W^*$ be the set of all $\mathcal U$-candidates contained in $W^*$. We will iteratively remove candidates from $Q_0$ to remove candidates with \emph{overlap}, i.e., $c^T_x \neq c^{T'}_x$ with $T\cap T' \neq \emptyset$. Set $i=0$. Denote $r_i(v)\coloneqq \lvert Q_i \cap A_v \rvert$. Then, if two candidates with index $x$ in $Q_i$ overlap  due to $u\in T\cap T'$, then $r_i(v^{u}_x) \ge 2$. Generally, for $v\in B$, we therefore consider the excess approval count $\max(r_i(v)-1,0)$, which is positive if and only if two candidates in $Q_i$ overlap due to $v$. Since we know that $\sum_{v\in B} \max(\lvert A_v\cap W^*\rvert -1,0) \le 3$, and $Q_i\subseteq W^*$, we have that $\sum_{v\in B} \max(r_i(v) -1,0)  \le 3$. If two candidates $c^T_{x}, c^{T'}_{x}$ overlap, delete one of them from $Q_i$ and, e.g., set $Q_{i+1}= Q_i\setminus \{c^{T'}_x\}$. This means that $\sum_{v\in B} \max(r_{i+1}(v) -1,0) \le \sum_{v\in B} \max(r_{i}(v) -1,0) -1$ due to the voters $v^u_x$ approving $c^{T'}_x$ and satisfying $r_{i}(v^u_x) -1>0$ for all $u\in T\cap T'\subseteq U$. Increment $i$ by one.
    After at most three deletions, we arrive at some $Q$ and have that the sum is zero, therefore, w.r.t. $Q$, $\max(\lvert Q\cap A_v\rvert-1, 0)= 0$ for all $v\in B$. This implies that for all $x\le 6$ and distinct $c^T_{x}, c^{T'}_{x}\in Q$, we have that $T\cap T' = \emptyset$. Further, since we deleted at most three candidates, $\lvert Q \rvert \ge \lvert Q_0\rvert -3$. Since the answer voter approves precisely of all $\mathcal U$-candidates, we obtain $Q_0= A_{v^*}\cap W^*$. By assumption, therefore $\lvert Q \rvert \ge \lvert Q_0\rvert -3 \ge (6t-2)-3 = 6(t-1)+1 $.  
\end{proof}

\begin{remark}
    The idea to pad the committee using candidates that correspond to triplets of $X3C$ voters is inspired by a construction of \citet{JLT26a}, who use a similar gadget.
\end{remark}

\section*{Acknowledgments}
I thank the chair of Algorithmic Decision Making and Society at HPI, led by Niclas Boehmer, for funding my ChatGPT Pro license. 
The author wrote the manuscript and takes full responsibility for the correctness of this paper.

\bibliography{group,EJRPO/myadditions}
\end{document}